\newtheorem{theorem}{Theorem}
\newtheorem{proposition}{Theorem}
\newtheorem{definition}{Definition}
\author{Sven Neth \\ Forthcoming in \emph{Australasian Journal of Philosophy}\thanks{This is a preprint of a paper whose final and definite form will be published in the \emph{Australasian Journal of Philosophy}, which is available online at \url{https://www.tandfonline.com/journals/rajp20}. Please cite published version.}}
\date{}
\title{Better Foundations for Subjective Probability}
\begin{document}

\maketitle

\begin{abstract}
How do we ascribe subjective probability? In decision theory, this question is often addressed by representation theorems, going back to \textcite{Ramsey1926}, which tell us how to define or measure subjective probability by observable preferences. However, standard representation theorems make strong rationality assumptions, in particular expected utility maximization. How do we ascribe subjective probability to agents which do not satisfy these strong rationality assumptions? I present a representation theorem with weak rationality assumptions which can be used to define or measure subjective probability for partly irrational agents.
\end{abstract}

\section{Introduction}\label{sec:intro}

In philosophy, psychology and economics, we often ascribe subjective probability or credence to people. For example, we might say that Ann's subjective probability that it will rain tomorrow is .3. What is the basis for such ascriptions of subjective probability? 

If we want to find out Ann's subjective probability that it will rain tomorrow, one natural idea is to look at which bets Ann is willing to accept. If I offer Ann a bet which pays one dollar if it rains tomorrow, how much is this bet worth to her? There is a long tradition in decision theory inspired by this idea, going back to \textcite{Ramsey1926}. In this tradition, we assume that an agent  satisfies certain principles of rationality and then \emph{define} or \emph{measure} subjective probability in terms of preferences.\footnote{\textcite{Buchak2017} calls this `interpretative decision theory': we use decision-theoretic principles to interpret an agent's mental states in a process of `radical interpretation' \parencite{Davidson1973, Lewis1974}.}

But what if Ann does not satisfy the rationality assumptions required by Ramsey? Ramsey assumes that the agent under consideration is an expected utility maximizer and there are good reasons to doubt that real-life agents maximize expected utility. Now perhaps this means that real-life agents do not have any subjective probabilities or that we cannot measure them. However, there is an alternative option: we can provide decision-theoretic foundations for subjective probability with weaker rationality assumptions.

I introduce a representation theorem building on \textcite{Savage1972} and \textcite{Krantz1971} which connects subjective probability to preference with much weaker rationality assumptions than standard representation theorems. In particular, I allow agents to not maximize expected utility, to violate stochastic dominance and to consider most options incomparable. The key idea is to start with comparative probability judgments and to construct a unique probability function which represents these comparative judgments. My representation theorem has important philosophical upshots: it makes sense of how we can ascribe precise subjective probability to partly irrational agents and how decision theory can provide useful advice. 

As suggested above, there are two ways of understanding the project of grounding ascriptions of subjective probability in preferences. First, one might attempt to define subjective probability in terms of preference. On this view, to say that Ann's subjective probability that it will rain tomorrow is .3 just means that Ann is willing to accept certain bets. So ascriptions of subjective probability are a `representational device' to talk about something more fundamental: Ann's preferences. Let us call this view \emph{constructivism}. Second, one might think that while subjective probability is not reducible to preference, we can measure subjective probability by observable preferences. Call this view \emph{realism}. I will mostly remain neutral between constructivism and realism. Like Ramsey's approach, my representation theorem can be interpreted as defining or measuring subjective probability in terms of preference. However, I will later suggest that the theorem naturally fits with an intermediate position between constructivism and realism: comparative probability is psychologically real but numerical probability functions are merely `representational devices' for talking about comparative probability.

Here is the plan. I start by introducing some terminology (\ref{sec:set-up}) and explain Ramsey's method for measuring subjective probability (\ref{sec:ramsey-method}). I discuss some problems for Ramsey's method (\ref{sec:ramsey-problem}). I introduce better foundations for subjective probability (\ref{sec:better-foundations}) and explain how they overcome the problems for Ramsey's method (\ref{sec:problems-solved}). I finish by sketching two ideas suggested by the representation theorem (\ref{sec:interpretation}): the view that comparative probability is more fundamental than numerical probability and a subjectivist version of the classical interpretation of probability.

\section{Set-up}\label{sec:set-up}

We have a set $\Omega$ of \emph{states} which describe the world apart from our agent's preferences and a $\sigma$-algebra $\mathscr{F}$ of subsets of $\Omega$ which are called \emph{events}.\footnote{A \textit{$\sigma$-algebra} on $\Omega$ is a set of subsets of $\Omega$ which contains $\Omega$ and is closed under complementation and countable unions.} For any $X \in \mathscr{F}$, we denote the relative complement of $X$ in $\Omega$ by $X^C$. We have a set $\mathscr{O}$ of \emph{outcomes} which contain everything our agent cares about. 

Following \textcite{Savage1972}, \emph{acts} are functions from states to outcomes. An act is \emph{finite-valued} iff it only takes finitely many different outcomes. Our \emph{act space} $\mathscr{A}$ is the set of all finite-valued acts. This means that we can write each act $f \in \mathscr{A}$ as $\{o_{1},E_{1};...;o_{n}, E_n\}$, where events $E_{1},....,E_{n}$ are pairwise disjoint sets and their union is  $\Omega$ and for each $E_{i}$ with $1\leq i\leq n$, $o_{i}$ is the unique outcome $o_{i}\in \mathscr{O}$ such that $f(\omega)=o_{i}$ for all $\omega\in E_{i}$.

I write `$\succsim$' for our agent's preference ordering over acts, a binary relation on $\mathscr{A}$. The intended interpretation of $f \succsim g$ is that our agent weakly prefers $f$ to $g$. Strict preference $(\succ)$ and indifference $(\sim)$ are defined in the usual way.\footnote{So $f \succ g \iff (f \succsim g) \land \neg (g \succsim f)$ and  $f \sim g \iff (f \succsim g) \land (g \succsim f).$} For each outcome $o \in \mathscr{O}$, the \emph{constant act yielding $o$}, written $\underline{o}$, is the act which assigns $o$ to all $\omega \in \Omega$. I define, for any $o, o' \in \mathscr{O}$, $o  \succsim  o'$ iff $\underline{o} \succsim \underline{o'}$. I use the term `option' to talk about both acts and outcomes.

\section{Ramsey's Method}\label{sec:ramsey-method}

\textcite{Ramsey1926} proposes axioms on preferences which imply that our agent is representable as expected utility maximizer.\footnote{\textcite[ch. 3]{Jeffrey1990} and \textcite{Bradley2004} reconstruct Ramsey's reasoning. Ramsey originally used a different  framework which does not distinguish states and outcomes, while I am reconstructing Ramsey's reasoning in terms of the Savage framework.} This means that there is some probability function and utility function such that our agent always prefers acts with higher expected utility.\footnote{Relative to probability function $p : \mathscr{F} \to [0,1]$ and utility function $u : \mathscr{O} \to \mathbb{R}$, the \emph{expected utility} of act $f = \{o_1,E_1; .... ; o_n,E_n\}$ is $\mathbb{E}_{u,p}(f) = \sum_{i=1}^n p(E_i)u(o_i).$} Ramsey then gives us a way to construct or infer our agent's utility function without already knowing our agent's probability function.

Once we have the utility function, Ramsey pins down the subjective probability of any event $E \in \mathscr{F}$ as follows. First, we find three outcomes $b,m,w \in \mathscr{O}$ (best, medium and worst) such that our agent strictly prefers $b$ over $w$ and is indifferent between getting $m$ for certain and a bet which yields $b$ if $E$ happens and $w$ otherwise:
\begin{equation}\label{r1}
b \succ w,
\end{equation}
\begin{equation}\label{r2}
m \sim \{b, E; w, E^{C}\}.
\end{equation}
Then we use the assumption that our agent maximizes expected utility to infer that $p(E) = \frac{u(m) - u(w)}{u(b) - u(w)}$.\footnote{Proof: since our agent maximizes expected utility, (\ref{r1}) and (\ref{r2}) entail that $u(m) = u(w) + p(E)(u(b) - u(w))$. Therefore, $u(m) - u(w) = p(E)(u(b) - u(w))$, so $p(E) = \frac{u(m) - u(w)}{u(b) - u(w)}$. The utility function is unique up to positive affine transformation so $p(E)$ is unique.} So for Ramsey, subjective probabilities are betting odds. Since event $E$ was arbitrary, we can use Ramsey's method to uniquely pin down the subjective probability of all events. Depending on whether we accept constructivism or realism, we can think of this as a definition of subjective probability in terms of preferences or as a way to measure subjective probability by preferences.

Ramsey had a lasting influence on decision theory.\footnote{\textcite{Fishburn1981} provides a great survey of decision theory after Ramsey. \textcite{Misak2020} places Ramsey's work in its broader intellectual context.} \textcite{Savage1972} also lays down axioms on the preference relation and proves a representation theorem which shows that any agent obeying these axioms can be represented as expected utility maximizer with a unique probability function. Many of the problems for Ramsey's approach I will discuss below generalize to Savage's representation theorem. However, as we will see later, the work of Savage holds key insights for an alternative approach to measure subjective probability.\footnote{\textcite{Jeffrey1990} develops a different framework for decision theory in which states, acts and outcomes are all propositions. However, Jeffrey's axioms do not pin down a unique probability function.}

\section{Problems for Ramsey}\label{sec:ramsey-problem}

I now turn to explain why Ramsey's method is not an adequate foundation for subjective probability.

\subsection{Strong Rationality Assumptions}\label{sec:strong-rationality}

Ramsey assumes that the agent under consideration is an expected utility maximizer. However, there are good reasons to think that real-life agents are not expected utility maximizers. Suppose you only care about money and choose between the following two lotteries:\footnote{A \emph{lottery} is a probability distribution over outcomes and can be realized by multiple acts.}

\begin{enumerate}

\item One million dollars for certain.\label{L1}

\item 89 \% chance of winning one million dollars, 10 \% chance of winning five million dollars, 1 \% chance of winning nothing.\label{L2}

\end{enumerate}

\noindent You also choose between the following two lotteries:

\begin{enumerate}[resume] 

\item 89 \% chance of winning nothing, 11 \% chance of winning one million dollars.\label{L3}

\item 90 \% chance of winning nothing, 10 \% chance of winning five million dollars.\label{L4}

\end{enumerate}

If you strictly prefer (\ref{L1}) over (\ref{L2}) and (\ref{L4}) over (\ref{L3}), your preferences are incompatible with expected utility maximization \parencite{Allais1953}.\footnote{ If you strictly prefer (\ref{L4}) over (\ref{L3}), $.1u(\$\textrm{5 Million}) > .11 u(\$\textrm{1 Million})$. So, adding the same term on both sides, $.1u(\$\textrm{5 Million}) + .89 u(\$\textrm{1 Million}) > .11 u(\$\textrm{1 Million}) + .89 u(\$\textrm{1 Million})$, which means that  $.1u(\$\textrm{5 Million}) + .89 u(\$\textrm{1 Million}) > u(\$\textrm{1 Million})$. But if you strictly prefer (\ref{L1}) over (\ref{L2}), $u(\$\textrm{1 Million}) > .89 u(\$\textrm{1 Million}) + .1 u(\$\textrm{5 Million})$.} However, real-life agents sometimes exhibit this pattern of preferences \parencite{oliver2003}. My point here is not that this pattern of preferences is rationally permissible, as argued by \textcite{Buchak2013}. Rather, my point is that real-life agents apparently have such `Allais-preferences'. Therefore, we cannot use Ramsey's method to define or measure their subjective probabilities. However, it still seems like such agents have subjective probabilities---after all, they are presumably \emph{using} their subjective probabilities to reason that (\ref{L1}) is better than (\ref{L2}) and  (\ref{L4}) is better than (\ref{L3}). So Ramsey's method is not a good foundation for subjective probability. To make this vivid, imagine you find out that Ann has the preferences described above. Should you conclude that Ann cannot have any subjective probabilities or that there is no way for us to find out what these probabilities are? I think not.

People also sometimes choose \emph{stochastically dominated} options. Option $A$ stochastically dominates option $B$ if for every outcome $o \in \mathscr{O}$, the probability that $A$ yields an outcome weakly preferred to $o$ is greater than or equal to the probability that $B$ yields an outcome weakly preferred to $o$. It is generally agreed that you should:

\begin{quote}
\textbf{Respect Stochastic Dominance.} If $f$ stochastically dominates $g$, then $f \succsim g$.
\end{quote}

This principle follows from many normative decision theories, such as expected utility theory, risk-weighted expected utility and others.\footnote{\textcite{Buchak2013} and \textcite{Tarsney2020} defend stochastic dominance. \textcite{Bader2018} points out the wide applicability of stochastic dominance reasoning even if outcomes are incomparable. \textcite{RussellForth} discusses some problems arising in such a setting.} However, empirical studies show that people sometimes violate this principle. Consider the following two lotteries:

\begin{enumerate}
\setcounter{enumi}{4}

\item 5\% chance of \$12, 5\% chance of \$14, 90\% chance of \$96.\label{L5}

\item 10\% chance of \$12, 5\% chance of \$90, 85\% chance of \$96.\label{L6}

\end{enumerate}

It is not hard to see that (\ref{L5}) stochastically dominates (\ref{L6}).\footnote{Both lotteries are sure to pay at least \$12. The probability of winning at least \$14 is 95\% in (\ref{L5}) and 90\% in (\ref{L6}). The probability of winning at least \$90 is 90\% in (\ref{L5}) and (\ref{L6}).} Nonetheless, in a study reported by \textcite{Birnbaum1998}, most subjects chose (\ref{L6}) over (\ref{L5}). This is presumably because they rely on quick but imperfect heuristics in their decision making. Again, my claim is not that these preferences are rational but only that real-life agents have such preferences. Therefore, we cannot use Ramsey's method to define or measure their subjective probabilities. But again, while it might be irrational to have preferences which violate stochastic dominance, such preferences do not seem to preclude agents from having subjective probabilities.

Finally, people sometimes regard options as \emph{incomparable} in value. Consider, for example, the choice between a career as a doctor and a career as a rock star. Both career choices can lead to a fulfilling and valuable life. However, what makes them valuable is radically different. It is difficult to see how one could compare the two options. Someone could reasonably think that one is not better than the other but neither are they exactly equally good \parencite{Chang2002}.

Incomparability arises both at the level of outcomes and acts. It is natural to understand the career choice example in terms of incomparable outcomes. In contrast, a second kind of incomparability might arise because it is too difficult to compare acts even if all of their outcomes are comparable. Suppose, for example, that you like more money rather than less. When faced with a choice between two complicated investment portfolios, you might nonetheless not have any preference between them because it is too difficult for you to reason about the decision problem. Again, my claim is not that incomparability is rational, but merely that real-life agents sometimes have such preferences.\footnote{Many have defended the stronger claim that incomparability can be rational. \textcite[p.~102]{Joyce1999} writes that `a decision maker can be perfectly rational even when her preferences do not satisfy the completeness axiom'. \textcite[p.~446]{Aumann1962} writes that `[o]f all the axioms of utility theory, the completeness axiom is perhaps the most questionable'. Similar points are made by \textcite{Hare2010, Bales2014, Schoenfield2014, Bader2018, Sen2018}.}

Expected utility theory has no room for incomparability. This is because your utility function assigns a real number to each outcome and so renders all outcomes comparable. Each act is ranked by its expected utility so all acts are comparable as well. Since real-life agents sometimes regard both outcomes and acts as incomparable, their preferences cannot be represented as expected utility maximization. However, it is not plausible that incomparability precludes agents from having subjective probabilities.

Here is the upshot. There are good reasons to think that real-life agents are not expected utility maximizers. Therefore, we cannot use Ramsey's method to define or measure their subjective probabilities. Some might take this as reason to embrace a kind of nihilism: such agents do not have subjective probabilities or there is no way to measure what they are. A better response is to provide foundations for subjective probability which apply even to agents which fail to maximize expected utility, do not respect stochastic dominance and consider some options incomparable. One might still think that we should model the beliefs of real-life agents by something other than precise probability functions. However, we can make room for irrational preferences without giving up decision-theoretic foundations for precise subjective probability. 

\subsection{No Useful Advice}

The standard decision-theoretic advice is to maximize expected utility relative to your subjective probability function and utility function. For this advice to be useful, we first need to figure out what your probability function \emph{is}.\footnote{The same point applies to the utility function but I focus on subjective probability.} However, if we define or measure your probability function on the assumption that you maximize expected utility, the advice to maximize expected utility can never be useful. Therefore, decision theory cannot play the role of giving useful advice.\footnote{\textcite[p.~99]{Resnik1987} writes, about representation theorems in decision theory: `the theorem can be applied only to those agents with a sufficiently rich preference structure; and if they have such a structure, they will not need utility theory---because they will already prefer what it would advise them to prefer'. \textcite{Meacham2011} and \textcite{Easwaran2014} deploy similar arguments. \textcite{Beck2021} also discuss the question of how decision-theoretic models can provide useful advice.}

This puzzle arises on both constructivism and realism. For constructivists, your probability function is defined in terms of preferences which satisfy certain axioms. If you violate the axioms, you simply do not \emph{have} a probability function and the advice to maximize expected utility is meaningless. For realists, you might still have a probability function if you violate the axioms. However, standard representation theorems give us no way to infer what this probability function is, so we cannot use decision theory to give useful advice.\footnote{This puzzle also arises for non-standard decision theories such as risk-weighted expected utility theory \parencite{Buchak2013} and weighted-linear utility theory \parencite{BottomleyForth}. On these theories, we also need an independent grasp on your subjective probability function, your utility function and possibly other functions like your risk function in order for the theory to provide useful advice.}

One reaction to this problem is to say that the only advice decision theory provides is: `obey the axioms!'. On this view, decision theory is merely a theory of consistency \parencite{Dreier1996, Okasha2016}. While I have no knock-down objection to this position, it is unattractive because it makes decision theory largely irrelevant to non-ideal agents like us who are pretty much guaranteed to violate some normative principle of decision making. It would be better if we could make sense of how decision theory can provide useful advice to partly irrational agents. As I will show, we can indeed make sense of this, which considerably weakens the plausibility of this response.

\subsection{Dependence on Utility}

Ramsey defines subjective probability as ratio of utilities. This requires a very rich space of outcomes. For Ramsey, outcomes must allow for continuous gradations of value.\footnote{For example, \textcite[p.~152]{Fishburn1981} writes that in Ramsey's approach, the set of outcomes `must be infinite and give arbitrarily fine gradations in utility'.} However, it seems like agents can have subjective probabilities while not making such fine-grained distinctions of value. We could even imagine agents who do not have a utility function at all but merely an ordinal ranking of outcomes. For example, we can imagine an agent which only distinguishes between two outcomes, GOOD and BAD. Ramsey must deny that such an agent could have subjective probabilities or that we can find out what they are. This seems implausible.

More broadly, Ramsey's approach gives utility a certain kind of priority over subjective probability. But you might think that subjective probability is conceptually independent of utility. It would be great to disentangle the assumptions needed to measure subjective probability from strong assumptions about the structure of value, for example that the value of all outcomes is comparable and that value can be measured by a real-valued utility function. Such assumptions about value have seemed implausible to many philosophers and it would be great to have foundations for subjective probability which do not rely on them.

\section{Better Foundations}\label{sec:better-foundations}

We can provide better foundations for subjective probability. I introduce and explain a representation theorem building on \textcite{Savage1972} and \textcite{Krantz1971} which yields a unique probability function representing our agent's beliefs on weak rationality assumptions. The key idea is to start with \emph{comparative probability judgments} and to construct a unique probability function which represents these comparative judgments.

\subsection{Comparative Probability}

What does it mean to think that one event is more probable than another? \textcite{Savage1972} proposes to define comparative probability judgments in terms of certain kinds of preferences. Suppose our agent strictly prefers outcome $b$ over outcome $w$. Now the intuition is that our agent \emph{prefers the better prize on the more probable event}. So if our agent prefers the act $\{b, X; w, X^{C}\}$ over $\{b, Y; w, Y^{C}\}$, this means that our agent believes that event $X$ is at least as likely as event $Y$, written $X \succcurlyeq Y$. So we can use acts of the form $\{b, X; w, X^{C}\}$, where $b \succ w$, to define or infer our agent's comparative probability judgments. Let us call these  \emph{test acts}. 

We define the following relation $\succcurlyeq$ on $\mathscr{F}$: 
\begin{definition}\label{def:comparative} 
$X \succcurlyeq Y$ iff $\{b, X; w, X^{C}\} \succsim \{b, Y; w, Y^{C}\}$ for some $b,w \in \mathscr{O}$ with $b \succ w$.
\end{definition} 
Strict comparative probability ($\succ$) and indifference ($\approx$) are defined in the standard way.\footnote{$X \succ Y \iff (X \succcurlyeq Y) \land \neg (Y \succcurlyeq X)$ and $X \approx Y \iff (X \succcurlyeq Y) \land (Y \succcurlyeq X)$. In a slight abuse of notation, I use the same symbol ($\succ$) for both strict preference and strict comparative probability. This way to link comparative probability to preferences is standard \parencite{Fishburn1986, Icard2016}.}

We can understand Savage's proposal in two ways. For constructivists, comparative probability reduces to preferences. (This is Savage's own view.) For realists, comparative probability does not reduce to preferences, but we can use preferences to measure comparative probability judgments.  The core proposal of this paper is compatible with both ways of understanding comparative probability. However, I think that realism about comparative probability is ultimately more plausible and can tell a better story about some of the axioms below.  I will return to this issue later.

\subsection{Axioms}

The first axiom ensures that the comparative probability ordering does not depend on our particular choice of outcomes:
\begin{quote}
\textbf{Outcome Independence}. For all $X, Y \in \mathscr{F}$, if $\{b, X; w, X^{C}\} \succsim \{b, Y; w, Y^{C}\}$ for some $b,w \in \mathscr{O}$ such that $b \succ w$, then $\{b, X; w, X^{C}\} \succsim \{b, Y; w, Y^{C}\}$ for all $b,w \in \mathscr{O}$ such that $b \succ w$.
\end{quote}
You would violate this axiom if you prefer to bet one dollar on event $X$ rather than event $Y$ but you also prefer to bet two dollars on $Y$ rather than $X$. In this case, we cannot elicit stable comparative probability judgments from your preferences.

The next axiom demands that our agent is not indifferent among all outcomes:
\begin{quote}
\textbf{Non-Degeneracy}. There are outcomes $b,w \in \mathscr{O}$ with $b \succ w$.
\end{quote}
What is the status of this axiom? Does the existence of subjective probability really require that you are not indifferent between all outcomes? \textcite{Eriksson2007} point out that we can imagine a Zen monk who is indifferent between all outcomes but nonetheless has subjective probabilities. Thus, there are problems with \textbf{Non-Degeneracy} understood along constructivist lines. However, if we are realists, we can accept \textbf{Non-Degeneracy} as a condition under which we can measure comparative probability. The Zen monk might have subjective probabilities, but if they are really indifferent among everything, there is simply no way for us to find out what these subjective probabilities are. So we can think of this axiom as a \emph{structure axiom} which ensures that preferences are rich enough to measure subjective probability.\footnote{\textcite[p.~82]{Joyce1999} distinguishes structure axioms and rationality axioms.}

Here is the third axiom:
\begin{quote}
\textbf{Restricted Ordering}. The relation $\succsim$ restricted to test acts with the same outcomes is complete and transitive. This means for any $b,w \in \mathscr{O}$ with  $b \succ w$, for all $X,Y \in \mathscr{F}$ we have either  $\{b, X; w, X^{C}\} \succsim \{b, Y; w, Y^{C}\}$ or $\{b, Y; w, Y^{C}\} \succsim \{b, X; w, X^{C}\}$.\footnote{Two test acts with different outcomes needn't be comparable if the outcomes themselves are incomparable. Thanks to an anonymous referee for bringing this issue to my attention.} And if $\{b, X; w, X^{C}\} \succsim \{b, Y; w, Y^{C}\}$ and $\{b, Y; w, Y^{C}\} \succsim \{b, Z; w, Z^{C}\}$, then $\{b, X; w, X^{C}\} \succsim \{b, Z; w, Z^{C}\}$.
\end{quote}
I do not constrain the preference relation in general to be complete and transitive, which leaves room for incomparability.

Why accept \textbf{Restricted Ordering}? Given our definition of comparative probability, \textbf{Restricted Ordering} requires that the comparative probability judgments of our agent are complete and transitive. There are reasons to be skeptical of both.\footnote{\textcite[p.~339]{Fishburn1986} discusses examples in which comparative probability judgments violate completeness and transitivity. \textcite{Ding2021} study logics for comparative probability without completeness.} For proponents of imprecise credences, rejecting completeness is particularly natural. Perhaps you have some opinion about how likely it is that there is life on mars and some opinion about how likely it is to rain tomorrow but no opinion about which is more likely. This seems particularly plausible if we consider agents which are not perfectly rational.

In response, remember that I want to explain how it is possible to ascribe \emph{precise} credences to agents with some irrational preferences. For this reason, I will not consider agents whose comparative probability judgments fail to be complete and transitive. Such agents fall outside of the scope of my project.

The next two axioms are where the main action is. Let us begin with:  
\begin{quote}
    \textbf{Certain Prize}. For any $b,w \in \mathscr{O}$, if $b \succ w$, then for any $X \in \mathscr{F}$, $\underline{b} \succsim \{b, X; w, X^C \}$ and $\{b, X ; w, X^C \} \succsim \underline{w}$.
\end{quote}
This principle states a plausible minimal rationality condition. It says that if you strictly prefer $b$ to $w$, then you must weakly prefer getting $b$ for certain to an act which yields $b$ if $X$ happens and $w$ otherwise. Further, you must weakly prefer this act to getting $w$ for certain. 

While \textbf{Certain Prize} is quite weak, it is possible to imagine agents which violate this axiom. For example, agents might prefer a risky option over a sure thing because they enjoy the thrill of gambling.\footnote{Thanks to an anonymous referee for raising this objection.} Relatedly, \textbf{Certain Prize} might be violated by agents who prefer randomization \parencite{Icard2021}. One response to this concern is to make more fine-grained distinctions among outcomes \parencite{Dreier1996}. For example, a prize obtained for sure would be a different outcome from the same prize obtained by a risky gamble. However, this move threatens to trivialize decision-theoretic norms. So it is best to concede that while the axiom is weaker than rationality axioms in standard representation theorems, it still makes substantive demands which some agents might violate.

Do agents which violate \textbf{Certain Prize} not have subjective probabilities? This is not very plausible. After all, it is precisely their subjective probabilities which lead them to prefer the risky option. It is more plausible to think that if agents love the thrill of gambling, it might be difficult to determine their subjective probabilities from their preferences. As I show below, \textbf{Certain Prize} is a necessary condition for the agent's comparative probability judgments to be representable by a probability function, so measuring the subjective probability of agents which violate \textbf{Certain Prize} would require a fundamentally different approach to measuring subjective probability.

Here is another key axiom:
\begin{quote}
    \textbf{Alternative Prize}. For any $X,Y,Z \in \mathscr{F}$ and $b, w \in \mathscr{O}$, if $b \succ w$ and $Z$ is such that $X \cap Z = Y \cap Z = \varnothing$, then $\{b, X; w, X^{C}\} \succ \{b, Y; w, Y^{C}\}$ iff $\{b, X \cup Z; w, (X \cup Z)^{C}\} \succ \{b, Y \cup Z; w, (Y \cup Z)^{C}\}$.
\end{quote}

\textbf{Alternative Prize} says the following. Suppose you strictly prefer $b$ over $w$ and you prefer $\{b, X; w, X^{C}\}$ over $\{b, Y; w, Y^{C}\}$. Now we modify both acts as follows: You also get $b$ if some event $Z$ disjoint from both $X$ and $Y$ happens. Now you should prefer $\{b, X \cup Z; w, (X \cup Z)^{C}\}$ to $\{b, Y \cup Z; w, (Y \cup Z)^{C}\}$. This reasoning also works backwards. \textbf{Alternative Prize} has a clear interpretation in terms of probability. If you prefer $\{b, X; w, X^{C}\}$ to $\{b, Y; w, Y^{C}\}$, you think that $X$ is at least as likely as $Y$. Therefore, $X \cup Z$ must be at least as likely as $Y \cup Z$ given that $Z$ is disjoint from both $X$ and $Y$. So you should prefer $\{b, X \cup Z; w, (X \cup Z)^{C}\}$ to $\{b, Y \cup Z; w, (Y \cup Z)^{C}\}$ since you want the better prize on the more probable event.

You might violate \textbf{Alternative Prize} if you have credences which are not additive and represented by an alternative formalism like Dempster-Shafer functions or ranking theory.\footnote{\textcite{Ellsberg1961} gives an example of preferences which violate \textbf{Alternative Prize}.} But my goal is to provide foundations for ascribing subjective \emph{probability} to partly irrational agents. So agents modeled by such formalisms fall outside the scope of my project.\footnote{\textcite[Ch.~14.3]{Titelbaum2022} gives a brief introduction to Dempster-Shafer functions and \textcite{Spohn2012} discusses ranking theory. As both authors note, it is  unclear how these alternatives to probability interact with decision making, which is a  reason to set them aside for our purposes. Thanks to an anonymous referee for the suggestion to consider these frameworks.} It would be desirable to have more general foundations for measuring belief which apply to agents with non-probabilistic credences, but I will not consider such agents here.

My axioms on the preference relation are necessary and sufficient for the comparative probability ordering to be a \emph{qualitative probability} \parencite{DeFinetti1931}:

\begin{definition}
A binary relation $\succcurlyeq$ on $\mathscr{F}$ is a \emph{qualitative probability} iff for all $X, Y, Z \in \mathscr{F}$:
\begin{enumerate}
    \item $\succcurlyeq$ is complete and transitive (Ordering),
    \item $\Omega \succcurlyeq X \succcurlyeq \varnothing$ (Boundedness),
    \item $\Omega \succ \varnothing$ (Non-Triviality),
    \item if $X \cap Z = Y \cap Z = \varnothing$, then $X \succ Y \iff X \cup Z \succ Y \cup Z$ (Qualitative Additivity).
\end{enumerate}
\end{definition}

\begin{theorem}\label{thm:qualitative-probability}
The preference relation $\succsim$ satisfies \textbf{Outcome Independence}, \textbf{Non-Degeneracy}, \textbf{Restricted Ordering}, \textbf{Certain Prize} and  \textbf{Alternative Prize} if and only if the comparative probability ordering $\succcurlyeq$ is a qualitative probability.
\end{theorem}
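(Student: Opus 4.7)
The plan is to prove the biconditional by matching each of the four qualitative probability conditions against one or more preference axioms, routing everything through the test-act bridge provided by Definition 1. Two observations drive both directions: first, $\{b,\Omega;w,\Omega^{C}\}=\underline{b}$ and $\{b,\varnothing;w,\varnothing^{C}\}=\underline{w}$, so Certain Prize talks directly about Boundedness; second, Non-Degeneracy supplies the witness pair $b\succ w$ that keeps every test-act statement non-vacuous.

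For the forward direction, I assume the five axioms on $\succsim$ and verify each clause of qualitative probability. Completeness of $\succcurlyeq$ comes from Non-Degeneracy plus Restricted Ordering applied to test acts with a fixed $b\succ w$, lifted to events via Definition 1. Transitivity takes $X\succcurlyeq Y$ and $Y\succcurlyeq Z$, uses Outcome Independence to bring the two witnesses onto a common outcome pair, and then closes with transitivity of Restricted Ordering. Boundedness is exactly the two-sided inequality in Certain Prize, rephrased. Non-Triviality follows because Non-Degeneracy gives $\underline{b}\succ\underline{w}$, which translates to $\Omega\succ\varnothing$. Qualitative Additivity first requires showing that $X\succ Y$ is equivalent to $\{b,X;w,X^{C}\}\succ\{b,Y;w,Y^{C}\}$ for every (equivalently, some) $b\succ w$, using Outcome Independence and Restricted Ordering; Alternative Prize is then exactly the event-side statement of the biconditional.

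For the backward direction, I assume $\succcurlyeq$ is a qualitative probability and recover the five preference axioms. Non-Degeneracy is forced by Non-Triviality, since $\Omega\succ\varnothing$ requires a witness $b\succ w$. Certain Prize, Alternative Prize, and Restricted Ordering arise as the test-act translations of Boundedness, Qualitative Additivity, and the Ordering clause of $\succcurlyeq$, because with $b,w$ fixed the preference between test acts must mirror the qualitative ordering. Outcome Independence follows from the same mirroring: if some witness pair confirms $X\succcurlyeq Y$, then the completeness of $\succcurlyeq$ together with the definition of strict comparative probability prevents any other pair from flipping the preference. The main obstacle I anticipate is this last step together with the strict-preference equivalence used for Qualitative Additivity: both hinge on showing that the existential quantifier in Definition 1 collapses to a universal one, which must be extracted carefully from the interplay of Outcome Independence, Restricted Ordering, and the definition of $\succ$. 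Once that equivalence is in place, the remaining translations are mechanical.
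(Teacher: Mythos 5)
Your proposal is correct and follows essentially the same route as the paper's own proof: the same one-to-one pairing of axioms with qualitative-probability clauses (Outcome Independence for well-definedness, Non-Degeneracy with Non-Triviality, Restricted Ordering with Ordering, Certain Prize with Boundedness via $\underline{b}=\{b,\Omega;w,\varnothing\}$, Alternative Prize with Qualitative Additivity), in both directions. You also correctly isolate the one genuinely delicate point---collapsing the existential quantifier in Definition \ref{def:comparative} to a universal one via \textbf{Outcome Independence}, both for the strict-preference bridge needed by Qualitative Additivity and for recovering \textbf{Outcome Independence} in the converse direction---which is exactly where the paper's own argument does its careful work.
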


A proof is provided in the appendix. I follow Savage's definition of comparative probability. Savage also assumes \textbf{Outcome Independence} and \textbf{Non-Degeneracy}. The key difference is that Savage uses much stronger axioms to derive the result that the comparative probability ordering is a qualitative probability. Instead of \textbf{Restricted Ordering}, Savage assumes that preferences are complete and transitive, which rules out incomparable options. This strong assumption is unnecessary to establish that the comparative probability ordering is complete and transitive. It suffices to assume that a small fragment of the preference relation is complete and transitive.

Further, Savage appeals to the `Sure-Thing Principle' in order to establish that the comparative probability ordering satisfies Boundedness, Non-Triviality and Qualitative Additivity. The Sure-Thing-Principle is a strong axiom which rules out the Allais-preferences discussed earlier and plays a crucial role in establishing the existence of an expected utility representation. The key observation is that we can replace the Sure-Thing-Principle by the much weaker rationality axioms \textbf{Certain Prize} and \textbf{Alternative Prize} and still show that the comparative probability ordering is a qualitative probability.\footnote{\textcite{Machina1992} also weaken Savage's axiom to give a `more robust definition of subjective probability'. However, their axioms are stronger than the ones given here, as they entail that preferences always respect stochastic dominance---a property they refer to as `probabilistic sophistication'---and that preferences are complete. My representation theorem shows how to define subjective probability \textit{without} probabilistic sophistication (and without completeness). \textcite{Elliott2017} provides a representation theorem for `frequently irrational' agents and uses a restricted class of two-outcome acts to construct a unique credence and utility function. However, this credence function is not necessarily a probability function, so this approach does not provide foundations for ascribing subjective \emph{probability} to partly irrational agents.}

\textcite[pp.~208-11]{Krantz1971} prove a similar result.\footnote{\textbf{Outcome Independence} is equivalent to the first axiom by \textcite{Krantz1971}, \textbf{Certain Prize} is equivalent to their second axiom and \textbf{Alternative Prize} is equivalent to their third axiom. They also mention \textbf{Non-Degeneracy}.} But instead of \textbf{Restricted Ordering}, they assume that preferences are complete and transitive, which rules out incomparable options. Furthermore, I have shown that my axioms are not only sufficient but necessary for the comparative probability ordering to be a qualitative probability. So my result is a strengthening of \textcite{Krantz1971}, maximally paring down the axioms on the preference relation required to show that the comparative probability ordering is a qualitative probability.

One could also axiomatize comparative probability directly and argue that the qualitative probability axioms are reasonable constraints on belief without trying to justify them by more fundamental axioms about preferences \parencite[p.~91]{Joyce1999}. However, my project is to show how we can infer credences from preferences without already assuming that we have access to comparative probability judgments. Therefore, I start with axioms on the preference relation.

So far, we have seen how preference reveals qualitative probability. How do we get from qualitative probability to quantitative probability? Probability function $p$ \textit{represents} the qualitative probability $\succcurlyeq$ if for all $X, Y \in \mathscr{F}$,
\begin{equation*}
p(X) \geq p(Y) \iff X  \succcurlyeq Y.
\end{equation*}
The axioms introduced so far are necessary but not sufficient for the existence of a probability function representing our qualitative probability \parencite{Kraft1959}. To get around this problem, I add an axiom which ensures that the space of events is sufficiently rich to pin down a (unique) probability function. Here is Savage's proposal:
\begin{quote}
\textbf{Event Richness}. For all $X,Z \in \mathscr{F}$ and outcomes $b,w \in \mathscr{O}$ with $b \succ w$, if $\{b, X; w, X^C \} \succ \{b,Z; w,Z^C\}$, there is a finite partition $\mathcal{Y} = \{Y_1, ... , Y_n\}$ of $\Omega$ such that for all $Y_i \in \mathcal{Y}$, $\{b, X; w, X^C \} \succ \{b, (Z \cup Y_i);w,(Z \cup Y_i)^C\}.$
\end{quote}
This axiom says that we can cut up events very finely. If you strictly prefer the good prize on $X$ rather than $Z$, there is a finite partition of our state space such that you still prefer the good prize on $X$ rather than $Z$ or one of the elements of our partition. It is instructive to state \textbf{Event Richness} in terms of comparative probability. In these terms, it says that if $X \succ Z$, then there exists a finite partition $\mathcal{Y} = \{Y_1, ... , Y_n\}$ of $\Omega$ such that for all $Y_i \in \mathcal{Y}$, $X \succ Z \cup Y_i$.

Why accept \textbf{Event Richness}? \textcite[p.~38]{Savage1972} gives the following argument. Suppose you judge $X$ to be more probable than $Z$. Savage points out that we could plausibly choose a coin and throw it sufficiently often such that you would still judge $X$ to be more probable than $Z$ or any particular sequence of heads and tails. As Savage notes, this doesn't require that you consider the coin to be fair. The possible outcomes of the coin flip form the required partition.

Let us end by briefly reflecting on the plausibility of \textbf{Event Richness}. Does rationality require that you cut up events very finely? Despite Savage's argument, this does not seem very plausible. Like \textbf{Non-Degeneracy}, we should think about \textbf{Event Richness} not as rationality axiom but rather as structure axiom which ensures that preferences are rich enough to fix subjective probability. This means that realism can tell a more plausible story about this axiom than constructivism. According to the realist story, it is not the existence of subjective probability which requires such a rich event space. Rather, the rich event space is necessary to infer (precise) probability from preference. 

\textbf{Event Richness} implies that $\Omega$ is infinite.\footnote{Proof sketch: Assume $\Omega$ is finite. Then consider the least probable event $X$ such that $X \succ \varnothing$. \textbf{Event Richness} demands that there exists a finite partition $\mathcal{Y}$ of $\Omega$ such that for all $Y_i \in \mathcal{Y}$, $\{b, X; w, X^C \} \succ \{b, Y_i;w, Y_{i}^C\}$, so $X \succ  Y_i \succ \varnothing$, which contradicts our assumption.} This might strike you as problematic because it seems possible to have subjective probabilities with a finite state space. One option is to look for another structure axiom which is compatible with finite state spaces but still allows us to derive a unique probability function. As \textcite{Luce1967} and \textcite{Fishburn1986} point out, there are such axioms, but they are rather complicated and do not have intuitive plausibility of \textbf{Event Richness}. Since we need some structure axiom anyways, it seems best to stick with \textbf{Event Richness} because of its intuitive plausibility. However, finding a good replacement for \textbf{Event Richness} which is compatible with finite state spaces is a way in which the representation theorem could be improved.\footnote{Another option would be axioms ensuring that the comparative ordering can be represented by some probability function which needn't be unique \parencite{Scott1964}. However, this is not compatible with providing decision-theoretic foundations for \emph{precise} subjective probability and so I will set it aside. One could also argue that comparative probability orderings on finite spaces should be \emph{extendable} to orderings on infinite state spaces which satisfy \textbf{Event Richness}.}

\subsection{Representation Theorem}

The axioms allow us to prove:
\begin{theorem}\label{thm:representation}
If the preference relation $\succsim$ satisfies \textbf{Outcome Independence}, \textbf{Non-Degeneracy}, \textbf{Restricted Ordering}, \textbf{Certain Prize}, \textbf{Alternative Prize} and \textbf{Event Richness}, there is a unique finitely additive probability function $p : \mathscr{F} \to [0,1]$ representing the comparative probability ordering $\succcurlyeq$, so for all $X,Y \in \mathscr{F}$,
\begin{equation*}
p(X) \geq p(Y) \iff X \succcurlyeq Y.
\end{equation*}
\end{theorem}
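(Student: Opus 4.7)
The plan is to apply Theorem~\ref{thm:qualitative-probability} first, which reduces the task to the following purely order-theoretic claim: if $\succcurlyeq$ is a qualitative probability on $\mathscr{F}$ satisfying \textbf{Event Richness}, then there is a unique finitely additive probability $p: \mathscr{F} \to [0,1]$ such that $p(X) \geq p(Y) \iff X \succcurlyeq Y$. This is a qualitative-probability version of Savage's classical result, and I would follow his route through \emph{almost uniform partitions} and a counting-based definition of $p$.

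The first step would be to show, using \textbf{Event Richness} and \textbf{Qualitative Additivity}, that for every positive integer $n$ there is an almost uniform $n$-fold partition $\{A_1^n, \ldots, A_n^n\}$ of $\Omega$ --- one in which every union of $k+1$ cells is $\succcurlyeq$ every union of $k$ cells. The construction is inductive: starting from a sufficiently fine partition (each cell $\prec$ some prescribed non-null event, which \textbf{Event Richness} supplies by iteration), one repeatedly splits and merges cells, using \textbf{Qualitative Additivity} to compare unions. The technical heart is that any strict gap $\bigcup_{j \in J} A_j \succ \bigcup_{i \in I} A_i$ with $|I| = k$, $|J| = k+1$ can be closed by refining the partition further via another application of \textbf{Event Richness}.

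With almost uniform partitions in hand, for each event $X$ and each $n$ I would set
\[
k(X,n) = \max\bigl\{\,k : \text{some union of } k \text{ cells of } \mathcal{A}^n \text{ is } \preccurlyeq X\,\bigr\},
\]
and define $p(X) = \lim_{n \to \infty} k(X,n)/n$. Almost-uniformity forces the sequence $k(X,n)/n$ to be Cauchy with error $O(1/n)$, and independence of the choice of almost uniform partition follows from the same inequalities. I would then verify the remaining properties directly: $p(\varnothing) = 0$ and $p(\Omega) = 1$ from \textbf{Non-Triviality} and \textbf{Boundedness}; finite additivity $p(X \cup Y) = p(X) + p(Y)$ for disjoint $X, Y$ from \textbf{Qualitative Additivity}, which lets the counting approximations for $X$ and $Y$ combine correctly; and the representation property from the observation that a strict comparison $X \succ Y$ forces $k(X,n) > k(Y,n)$ for all sufficiently large $n$, again using \textbf{Event Richness} to rule out a vanishing gap.

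Uniqueness is the cleanest part: any representing probability $q$ must satisfy $k(X,n)/n \leq q(X) \leq (k(X,n)+1)/n$, since the qualitative inequalities defining $k(X,n)$ translate into numerical inequalities on $q$ via finite additivity; letting $n \to \infty$ forces $q(X) = p(X)$ for every $X \in \mathscr{F}$. The main obstacle throughout is the first step: engineering almost uniform partitions of every order from the qualitative content of \textbf{Event Richness} demands careful inductive bookkeeping, and once such partitions exist everything downstream is essentially a matter of manipulating the counting approximations.
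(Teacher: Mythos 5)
Your proposal follows essentially the same route as the paper: reduce to a qualitative probability via Theorem~\ref{thm:qualitative-probability}, then run Savage's partition-based counting argument, defining $p(X)$ as the limit of $k(X,n)/n$ and obtaining uniqueness from the squeeze $k(X,n)/n \leq q(X) \leq (k(X,n)+1)/n$. The only (immaterial) difference is that you work with Savage's \emph{almost uniform} partitions where the paper's sketch invokes exactly equiprobable $n$-fold partitions; both are standard reconstructions of the same argument, and your outline is correct.
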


Once we have shown that the comparative probability ordering is a qualitative probability, the rest of the proof is due to Savage. Here is a quick proof sketch inspired by \textcite[pp.~120-125]{Kreps1988}:

\begin{proof} The axioms entail that for any $n \in \mathbb{N}$, there is a partition $\mathcal{Y}$ of $\Omega$ into $n$ equiprobable events: events such that $Y_i \approx Y_j$ for each $Y_i, Y_j \in \mathcal{Y}$.\footnote{\textcite[pp.~195-8]{Fishburn1970} provides a detailed reconstruction of this step of the proof. \textcite{Gaifman2018} discuss how it relies on the assumption that the events form a $\sigma$-algebra.}  We write $C(k,n)$ for a union of $k$ cells of this partition. We define, for any $X \in \mathscr{F}$:
\begin{equation*}
k(X,n) = \textrm{max}_k \big( X \succcurlyeq C(k,n)\big).
\end{equation*}
So given a $n$-fold equiprobable partition, $k(X,n)$ is the unique maximal positive integer such that $X$ is at least as probable as the union of $k$ cells of our partition. We define
\begin{equation*}
p(X) = \lim_{n \to \infty} \frac{k(X,n)}{n}.
\end{equation*}
One can show that $p$ is a finitely additive probability function which represents $\succcurlyeq$ and that it is unique.
\end{proof}

In this proof, we divide $\Omega$ into more and more fine-grained equiprobable partitions. For every such partition, we `approximate' $p(X)$ by the largest number of cells collectively less probable (according to our comparative probability ordering) than $X$ divided by the number of all cells. Step by step, we get a closer approximation, until we recover the `true' probability of $X$ in the limit. As a simple analogy, think of approximating the area of a two-dimensional figure by drawing more and more fine-grained grids and counting the number of squares covered by the figure divided by the number of all squares. As the grid gets more and more fine-grained, we approximate the area of our figure more and more closely and we recover the true area in the limit.

We can think of the theorem in two ways. If we are inclined towards constructivism, we can think of it as a definition of subjective probability in terms of preferences. In this case, the fact that Ann's subjective probability of rain tomorrow is .3 is \emph{constituted} by the fact that 
\begin{equation*}
\lim_{n \to \infty} \frac{k(rain,n)}{n} = .3,
\end{equation*}
and from this perspective, my axioms are conditions under which subjective probability exists. If we are inclined towards realism, we think that there is a probability function encoding Ann's beliefs not defined in terms of her preferences. From a realist point of view, we can interpret the proof as giving an algorithm to \emph{measure} Ann's subjective probability by constructing better and better approximations. From this perspective, my axioms are conditions under which subjective probability can be measured by this algorithm.

\subsection{Countable Additivity}\label{sec:countable-add}

As it stands, the representation theorem delivers a \emph{finitely additive} probability function which represents our agent's beliefs. This probability function might fail to be \emph{countably additive}.\footnote{The probability function $p : \mathscr{F} \to [0,1]$ is countably additive if for any countable sequence $X_1, X_2, ... $ of pairwise disjoint events in $\mathscr{F}$, $p(\bigcup_{n=1}^{\infty} X_i) = \sum_{n=1}^{\infty} p(X_i)$.} Some decision theorists, for example de Finetti and Savage, have argued that rationality only requires finite additivity and violations of countable additivity are fine. However, there are also reasons to want countable additivity. Most importantly, there are convergence theorems in Bayesian statistics which show that under certain conditions, agents with different priors converge to similar opinions after learning enough shared evidence.\footnote{Subjective Bayesians draw on  such convergence theorems to argue that, despite different priors, rational agents will agree in the long run \parencite[ch. 6]{Earman1992}. Convergence arguments also play an important role in some versions of objective Bayesianism \parencite{Neth2023}.} Many of these convergence theorems require countable additivity \parencite{Elga2016}. So if convergence is a central part of your conception of subjective probability, finitely additive probability is not enough. This is not the place to settle whether arguments for countable additivity are conclusive. The key point is that if countable additivity is desirable, we can add another plausible axiom on preferences to ensure that subjective probabilities are countably additive, building on work by \textcite{Villegas1964}. Details are in the appendix.

\section{Problems Solved}\label{sec:problems-solved}

I explain how my representation theorem does better than Ramsey's method. 

\subsection{Weak Rationality Assumptions}

My axioms do not entail that our agent is an expected utility maximizer. They do not even entail the weaker claim that our agent always respects stochastic dominance.  A quick way to see this is that my axioms only constrain preferences over a very restricted set of acts---two-outcome acts where one outcome is strictly preferred---while expected utility maximization and stochastic dominance constrain preferences over all acts. My axioms do not even require preferences over arbitrary acts to be transitive. So the axioms are compatible with Allais-preferences and violations of stochastic dominance. 

Further, the axioms allow agents to consider many options incomparable. To be sure, \textbf{Non-Degeneracy} requires the existence of at least two comparable outcomes. However, the axioms allow agents to consider \emph{all other} outcomes incomparable. Thus, we can make room for the kind of outcome incomparability discussed above (career as a doctor vs. career as a rock star). Furthermore, we allow agents to consider acts with more than two outcomes incomparable even if all outcomes are comparable, like in the complicated portfolio choice discussed earlier. So, speaking a bit loosely, we allow agents to consider almost all options incomparable.

I still make substantive rationality assumptions. In particular, as discussed above, we can imagine agents which violate \textbf{Certain Prize} and \textbf{Alternative Prize}. It would be desirable to have even more general foundations for subjective probability. But there is a trade-off between substantive rationality axioms which allow us to measure subjective probability but exclude some agents and weak rationality axioms which include these agents but might make measuring subjective probabilities impossible. In particular, Theorem \ref{thm:qualitative-probability} shows that my rationality axioms are necessary conditions for the agent's comparative probability judgments to be representable by a probability function. The comparative probability judgments of agents which violate these axioms cannot be represented by any probability function. So if we want to further weaken these axioms, a fundamentally different approach to measuring subjective probability is needed.

\subsection{Useful Advice}\label{sec:advice}

As explained above, my axioms allow agents to have some irrational preferences. Thus, we can give useful advice. We can define or measure the subjective probabilities of partly irrational agents from their preferences over simple acts and use these probabilities to give useful advice for how to choose among more complicated acts.

You might complain that my axioms are too weak because they only constrain preferences over test acts. But this is a feature, not a bug. We can measure or define subjective probability from preferences over test acts and then apply your favorite decision-theoretic norm to give advice for choices among more complicated acts. I remain agnostic on what exactly this advice looks like. Beyond the basic requirement to respect stochastic dominance, different decision theorists will give different advice: some of them will advise you to maximize expected utility, others will advise you to maximize risk-weighted expected utility and so on. Since I allow incomparable options, there is also the question of how to decide when options are incomparable. But any decision theorist needs to know at least your credences to give useful advice.\footnote{For decision-theoretic advice to be useful, we also need some way to measure your utility function \parencite{Narens2020} and possibly other functions like your risk function \parencite{Neth2019a}.} My representation theorem shows how we can measure or define your credences without already presupposing that your preferences are fully rational and so enables the decision theorist to give useful advice.

Here is a simple toy example for how we can give useful advice. There is an urn with some red marbles, some yellow marbles and some black marbles. A marble will be drawn from this urn.\footnote{To satisfy \textbf{Event Richness}, let us assume that this is the first draw in an infinite sequence of draws from the urn.} We observe that Ann strictly prefers winning one dollar if the marble is red over winning one dollar if the marble is black. So Ann prefers $\{\$1, R; \$0, R^C\}$ over $\{\$1, B; \$0, B^C\}$, where $R$ is the event that the marble is red and $B$ the event that the marble is black. We know that Ann likes more money rather than less, so Ann thinks $R$ is more likely than $B$. Using \textbf{Alternative Prize}, we can infer that Ann must judge $R \cup Y$ to be more likely than $B \cup Y$, where $Y$ is the event that the marble is yellow.

Now suppose Ann faces another choice. The first option pays one dollar if the marble is red, two dollars if the marble is yellow and nothing otherwise: $\{\$1, R; \$2,Y; 0\$, B\}$. The second option pays one dollar if the marble is black, two dollars if the marble is yellow and nothing otherwise: $\{ \$1, B; \$2, Y; \$0,R\}$. We can advise Ann that, to avoid (strict) stochastic dominance, she should prefer the first option. This is genuinely useful advice because it is consistent with my axioms that Ann has no preference among these options or even prefers the stochastically dominated option.

\subsection{No Dependence on Utility}

My axioms make minimal demands on the richness of the outcome space. I only require that there are at least two outcomes our agent is not indifferent between. Thus, we can define or measure subjective probabilities of a very `simple-minded' agent who only distinguishes between the outcome GOOD and the outcome BAD and has an ordinal ranking of these two outcomes. We can disentangle measuring subjective probability from the strong assumptions about value implicit in standard representation theorems.

There is a subtle difference in how my structure axioms compare to Ramsey and Savage. While I do not assume a rich space of outcomes, I do assume a rich space of events, as required by \textbf{Event Richness}. Ramsey's original method does not need such a rich space of events.\footnote{When reconstructing Ramsey's reasoning, \textcite[151]{Fishburn1981} writes that Ramsey assumes `a finite state set'. As noted above, no finite $\Omega$ can satisfy \textbf{Event Richness}.} So in terms of structural richness, my method does better than Ramsey's in one way but worse in another way. This means that our axioms are incomparable in terms of their logical strength. However, from a philosophical point of view, I think that Ramsey and my approach assume a similar amount of structural richness. Ramsey assumes that our agent makes very fine-grained distinctions with respect to the value of outcomes while I assume that our agent makes very fine-grained distinctions with respect to the comparative probability of events. In contrast, my rationality axioms are much weaker than Ramsey's rationality assumptions. Compared with Savage, I make the same structural assumptions about event richness but much weaker rationality assumptions, so we have a strictly more general decision-theoretic foundation for subjective probability than Savage's. 

The upshot: I have shown how to define or measure subjective probability with much weaker rationality axioms than standard representation theorems. If we are interested in ascribing precise subjective probability to partly irrational agents, this is definite progress. One might also take this result as illustration of how strong \textbf{Event Richness} really is. This axiom is doing the heavy lifting in my construction of subjective probability. On the one hand, this might incline some of us to be skeptical of this structure axiom.\footnote{\textcite[p.~98]{Joyce1999} expresses skepticism about the structure axioms in Savage's representation theorem, although one of Joyce's main targets of  completeness which I don't assume.} On the other hand, nobody has figured out how to derive subjective probability without rich preferences and it is probably impossible to do so. So it is fair to say that we have found \emph{better foundations for subjective probability}.

\section{Interpretation}\label{sec:interpretation}

I sketch two ways in which my representation theorem sheds light on the interpretation of subjective probability. First, it naturally fits with a view on which comparative probability is more fundamental than numerical probability. Second, it suggests a subjectivist version of the classical interpretation of probability.

\subsection{Comparativism}

Our starting point were comparative probability judgments defined in terms of preferences. I laid down axioms to ensure that this comparative probability ordering is a qualitative probability and an additional structure axiom to ensure that there is a unique probability function which represents this ordering. While we ultimately end up with a unique probability function which represents our agent's beliefs, this approach naturally suggests a picture on which comparative probability is \emph{more fundamental} than numerical probability.\footnote{Comparativism is discussed by \textcite{Koopman1940, Fine1973, Zynda2000, Hawthorne2016, Stefansson2017, Konek2019, Elliott2022}. Of course, the idea of starting with comparative probability is well-known in decision theory \parencite{Fishburn1986}. However, it is valuable to make explicit that we can be realists about comparative probability and constructivists about numerical probability, while many decision theorists like Savage are constructivist all the way down. As \textcite{Holliday2013} point out, comparative probability can also shed light on probability operators in natural language and so might help us with puzzles about which inferences with these operators are valid \parencite{Yalcin2010, Neth2019b}.} This is in sharp contrast to Ramsey's approach. For Ramsey, subjective probabilities are ratios of utilities and so they are fundamentally quantitative. 

The idea that comparative probability is more fundamental than numerical probability has considerable intuitive appeal. It is more natural to think about which of two events is more likely than to assign numerical probabilities. Furthermore, as I will turn to explain now, taking comparative probability as fundamental allows us to tell a plausible story about the axioms.

My representation theorem naturally fits with a combination of realism and constructivism: realism about comparative probability and constructivism about numerical probability. According to this picture, the comparative probability ordering is psychologically real and not reducible to preferences---rather, preferences serve to measure comparative probability. This is the realist aspect. The advantage of this bit of realism is that we can tell a plausible story about some axioms, in particular \textbf{Non-Degeneracy}, which requires our agent not to be indifferent among all outcomes. It is not very plausible to think that the \emph{existence} of comparative probability requires this axiom, but much more plausible to think that \emph{measuring} comparative probability requires this axiom.

However, in contrast to the comparative probability ordering, the probability function constructed in the representation theorem is not psychologically real but only a `representational device' to talk about the underlying comparative probability ordering. This is the constructivist aspect. The advantage of this bit of constructivism is that we can tell a plausible story about \textbf{Event Richness}. It is implausible to think that subjective probability requires the rich event space postulated by this axiom. If we think of comparative probability as fundamental, we can say that agents might have comparative subjective probabilities even if they do not satisfy this axiom. The axiom describes a condition under which we can \emph{represent} comparative subjective probability by a unique probability function, not a condition under which subjective probability \emph{exists}.

\subsection{Vindicating the Classical Picture}

According to the \emph{classical interpretation of probability} associated with Laplace, we can determine the probability of some event as follows.\footnote{\textcite[ch. 2]{Gillies2000} gives an overview of the classical interpretation and \textcite[ch. 1]{Diaconis2018} briefly recount the history of reasoning about probability in terms of `equally possible' cases.} First, we find a suitable set of `equally possible' cases. Then, we count the number of cases in which the event occurs and divide this number by the number of all cases. For example, if we want to find out the probability of snake eyes (two 1's) when rolling two fair dice, there are 36 `equally possible' cases and exactly one of these cases is snake eyes, so the probability of snake eyes is $\frac{1}{36}$. 

There are many well-known objections to the classical interpretation of probability. First, you might complain that the definition given above is circular. Laplace defines probability in terms of `equally possible' cases, but it is hard to see what `equally possible' could mean other than `equally probable'. Second, the classical interpretation entails that all probabilities are rational, since they are the ratio of two positive integers. But there is nothing incoherent about irrational-valued probabilities.\footnote{\textcite{Hajek1996} uses this as an argument against (finite) frequentism.} Third, what guarantees that we can always find `equally possible' cases? They are easy to find in games of chance but much harder to find in real-life situations, where we might try to find the probability that a nuclear power plant will have a catastrophic accident in the next 100 years \parencite[p.~18]{Halpern2003}.

My representation theorem can be construed as \emph{subjectivist version of the classical interpretation of probability}. Recall how we construct the subjective probability function. To find Ann's subjective probability for rain tomorrow, we find $n$ mutually exclusive and collectively exhaustive events which Ann considers to be equally probable. We write down $k$, the greatest number of events Ann considers to be collectively less likely than rain tomorrow. This is a bit like counting the number of `cases' in which it rains tomorrow. We approximate Ann's subjective probability of rain tomorrow by $k$ divided by the total number of cases. We define Ann's subjective probability of rain tomorrow as the limit of this procedure as $n$ goes to infinity.

The shift towards the subjective helps with some of the well-known worries for the classical interpretation. First, what does it mean to say that the cases are `equally possible'? In our picture, it means that they are judged to be equally probable by our subject and we can know that they are so judged by looking at our subject's preferences. Since we have not defined comparative probability in terms of numerical probability but rather directly in terms of preferences, we can sidestep the circularity worry. Second, since we define subjective probability as limit of a sequence of rational numbers, we can have irrational-valued subjective probabilities. 

Some worries for our subjectivist Laplacean picture still remain. What guarantees that, for any $n$, we can find a partition of $n$ mutually exclusive and collectively exhaustive events which our subject considers to be equally likely? In our construction, this follows from \textbf{Event Richness}, which ensures that the event space of our subject is sufficiently fine-grained. But is it a requirement for the existence of subjective probability to have such a fine-grained event space? Arguably not. If we accept comparativism, we can reply that the more fundamental comparative subjective probabilities still exist without \textbf{Event Richness}, but they may not admit of representation by a unique probability function. From this point of view, Laplace's `equiprobable cases' highlight a condition under which comparative probability judgments can be represented by a unique probability function.

\section{Conclusion}\label{sec:conclusion}

Ramsey wants to reduce subjective probability to preference but makes very demanding rationality assumptions---the agent under consideration has perfectly coherent preferences. I have shown how to provide better foundations for subjective probability: axioms which ensure that there is a unique probability function representing our agent's beliefs while leaving room for mistakes.

Let me close by observing that we if we are convinced that my axioms are rationally required, we can also read my representation theorem as an answer to the question: \emph{Why be probabilistically coherent?} Because rationality requires you to obey the axioms and if you obey the axioms, there is a unique probability function representing your comparative probability judgments. In contrast to other decision-theoretic arguments for probabilistic coherence, such as dutch book arguments or standard representation theorems, this argument does not presuppose or entail expected utility maximization. So we have a new argument for probabilistic coherence from weak assumptions about practical rationality in the face of uncertainty.\footnote{In contrast to accuracy arguments for probabilistic coherence, we also avoid commitments about epistemic value. Furthermore, accuracy arguments run into difficulties when there are infinitely many possibilities \parencite{Kelley2023}.}

\section*{Acknowledgements}

I'm very grateful to Kenny Easwaran, Wesley Holliday, Marcel Jahn, Mikayla Kelley, John MacFarlane, Edward Schwartz and two anonymous referees for helpful comments on previous drafts. Special thanks to Lara Buchak for several rounds of comments and her enthusiasm and support for this project. Further thanks to audiences at the Formal Epistemology Workshop 2022 at UC Irvine and Berkeley's Formal Epistemology Reading Course (FERC) for helpful discussion, especially Mathias B{\"o}hm, Yifeng Ding, and Hanti Lin. While working on this paper, I was supported by a Global Priorities Fellowship by the Forethought Foundation and a Josephine De Karman Fellowship. 

\printbibliography

\section*{Appendix}\label{sec:appendix}

\begin{proposition}
The preference relation $\succsim$ satisfies \textbf{Outcome Independence}, \textbf{Non-Degeneracy}, \textbf{Restricted Ordering}, \textbf{Certain Prize} and  \textbf{Alternative Prize} if and only if the comparative probability ordering $\succcurlyeq$ is a qualitative probability.
\end{proposition}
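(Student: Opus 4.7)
The plan is to prove both directions by reading off each clause of qualitative probability from the corresponding preference axiom, using Outcome Independence as the bridge between the existential ``for some $(b,w)$'' in the definition of $\succcurlyeq$ and the ``for all $(b,w)$'' formulations needed elsewhere. Non-Degeneracy is what makes the definition nontrivial by guaranteeing some such pair exists.

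For the forward direction, I would first fix a pair $b \succ w$ given by Non-Degeneracy. Completeness of $\succcurlyeq$ then follows immediately from Restricted Ordering applied to the test acts $\{b, X; w, X^C\}$ and $\{b, Y; w, Y^C\}$. For transitivity, given $X \succcurlyeq Y$ and $Y \succcurlyeq Z$, Outcome Independence promotes the existential witnesses onto a common pair $(b,w)$ and the transitivity clause of Restricted Ordering yields $X \succcurlyeq Z$. Boundedness is immediate from Certain Prize once one notes that $\{b, \Omega; w, \varnothing\} = \underline{b}$ and $\{b, \varnothing; w, \Omega\} = \underline{w}$. Non-Triviality reduces to $b \succ w$: $\Omega \succcurlyeq \varnothing$ follows from Boundedness, and strictness follows because $\varnothing \succcurlyeq \Omega$ would, via Outcome Independence, force $\underline{w} \succsim \underline{b}$, contradicting $b \succ w$.

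Qualitative Additivity is the delicate step. Starting from $X \succ Y$, I would unpack this as $X \succcurlyeq Y$ together with $\neg(Y \succcurlyeq X)$, and then use Outcome Independence to lift the existential witness of $X \succcurlyeq Y$ to the uniform strict preference $\{b,X;w,X^C\} \succ \{b,Y;w,Y^C\}$ holding for every $b \succ w$. Alternative Prize, being a biconditional on strict preferences between test acts, then yields the strict preference between the corresponding test acts over $X \cup Z$ and $Y \cup Z$, which translates back to $X \cup Z \succ Y \cup Z$. The converse direction of Qualitative Additivity runs the same chain of reasoning in reverse.

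The backward direction is essentially symmetric: each preference axiom is recovered from the qualitative probability clause whose translation it encoded, with Non-Degeneracy coming from Non-Triviality (which requires $b \succ w$ to exist for $\succcurlyeq$ to be well-defined in the first place). The main obstacle throughout is the careful bookkeeping around the quantifier over $(b,w)$ in the definition of $\succcurlyeq$: the preference axioms speak about specific outcome pairs while the qualitative probability clauses speak about the derived relation, so Outcome Independence must be invoked at precisely the right points, especially when handling the negations hidden inside strict preferences.
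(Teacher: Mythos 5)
Your proposal is correct and follows essentially the same route as the paper's own proof: a clause-by-clause translation between the preference axioms and the qualitative probability conditions, with \textbf{Outcome Independence} doing the quantifier bookkeeping between the existential in Definition \ref{def:comparative} and the universal statements, and the backward direction run symmetrically. If anything, your handling of Qualitative Additivity is slightly more careful than the paper's, since you explicitly unpack $X \succ Y$ into $X \succcurlyeq Y$ plus $\neg(Y \succcurlyeq X)$ and lift to a uniform strict preference before invoking \textbf{Alternative Prize}.
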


\begin{proof}
I begin by showing the left-to-right direction. Assume $\succsim$ satisfies the axioms. By \textbf{Outcome Independence} and Definition \ref{def:comparative}, $\succcurlyeq$ is a binary relation on $\mathscr{F}$. By \textbf{Non-Degeneracy}, there are $b,w \in \mathscr{O}$ with $b \succ w$. By \textbf{Restricted Ordering}, for any $X,Y \in \mathscr{F}$, we have $\{b, X; w, X^C\} \succsim \{b, Y; w, Y^C\}$ or $ \{b, Y; w, Y^C\} \succsim \{b, X;w, X^C\}$, so by Definition \ref{def:comparative}, $X \succcurlyeq Y$ or $Y \succcurlyeq X$. Therefore, $\succcurlyeq$ is complete. Analogous reasoning shows that $\succcurlyeq$ is transitive, so $\succcurlyeq$ satisfies Ordering. 

Consider any $X \in \mathscr{F}$. By \textbf{Certain Prize}, we have $\underline{b} \succsim \{b, X; w, X^C\}$ and $\{b, X; w, X^C\} \succsim \underline{w}$. Now $\underline{b} = \{b, \Omega; w, \varnothing\}$ and $\underline{w} = \{w, \Omega;b, \varnothing\}$. So $\{b, \Omega; w, \varnothing\} \succsim \{b, X; w, X^C\}$ and $\{b, X; w, X^C\} \succsim \{w, \Omega;b, \varnothing\}$, and by Definition \ref{def:comparative}, $\Omega \succcurlyeq X \succcurlyeq \varnothing$, so $\succcurlyeq$ satisfies Boundedness. By analogous reasoning, $\succcurlyeq$ satisfies Non-Triviality. 

Now assume $X \cap Z = Y \cap Z = \varnothing$. We want to show that $X \succ Y \iff X \cup Z \succ Y \cup Z$. Assume $X \succ Y$. By Definition \ref{def:comparative},  $\{b, X; w, X^C\} \succsim \{b, Y;w, Y^C\}$ for some $b,w \in \mathscr{O}$ with $b \succ w$. By \textbf{Alternative Prize}, $\{b, X \cup Z; w, (X \cup Z)^{C}\} \succ \{b, Y \cup Z; w, (Y \cup Z)^{C}\}$, so $X \cup Z \succ Y \cup Z$ by Definition \ref{def:comparative}. Analogous reasoning shows the converse implication, so $\succcurlyeq$ satisfies Qualitative Additivity.

I proceed to show the right-to-left direction. Assume $\succcurlyeq$ is a qualitative probability. We want to show that $\succsim$ satisfies the axioms. Assume $\{b, X; w, X^{C}\} \succsim \{b, Y; w, Y^{C}\}$ for some $b,w \in \mathscr{O}$ with $b \succ w$. By Definition \ref{def:comparative}, $X \succcurlyeq Y$. Now assume for \emph{reductio} that for some $b',w' \in \mathscr{O}$ with $b' \succ w'$, $\{b', X; w', X^{C}\} \not \succsim \{b', Y; w', Y^{C}\}$. By Definition \ref{def:comparative}, $X \not \succcurlyeq Y$, which contradicts our assumption. Therefore, $\{b, X; w, X^{C}\} \succsim \{b, Y; w, Y^{C}\}$ for all $b,w \in \mathscr{O}$ such that $b \succ w$, so \textbf{Outcome Independence} holds. We have $\Omega \succ \varnothing$ by Non-Triviality, so $\{b, \Omega; w, \varnothing\} \succ \{b, \varnothing; w, \Omega\}$ for some  $b,w \in \mathscr{O}$ with $b \succ w$. Therefore, there are some $b,w \in \mathscr{O}$ with $b \succ w$, so \textbf{Non-Degeneracy} holds. 

By Ordering, for all $X,Y \in \mathscr{F}$, $X \succcurlyeq Y$ or $Y \succcurlyeq X$. Consider some $b,w \in \mathscr{O}$ with $b \succ w$. We want to show that for all $X,Y \in \mathscr{F}$, either $\{b, X; w, X^{C}\} \succsim \{b, Y; w, Y^{C}\}$ or $\{b, Y; w, Y^{C}\} \succsim \{b, X; w, X^{C}\}$. Assume $X \succcurlyeq Y$. By Definition \ref{def:comparative}, $\{b', X; w', X^{C}\} \succsim \{b', Y; w', Y^{C}\}$ for some  $b',w' \in \mathscr{O}$ with $b' \succ w'$. So by \textbf{Outcome Independence}, $\{b, X; w, X^{C}\} \succsim \{b, Y; w, Y^{C}\}$. An analogous argument applies if $Y \succcurlyeq X$. For transitivity, assume that for some $b,w \in \mathscr{O}$ with $b \succ w$, we have $\{b, X; w, X^{C}\} \succsim \{b, Y; w, Y^{C}\}$ and $\{b, Y; w, Y^{C}\} \succsim \{b, Z; w, Z^{C}\}$. By Definition \ref{def:comparative}, $X \succcurlyeq Y$ and $Y \succcurlyeq Z$, so by Ordering it follows that $X \succcurlyeq Z$. Again by Definition \ref{def:comparative},  $\{b', X; w', X^{C}\} \succsim \{b', Y; w', Y^{C}\}$ for some  $b',w' \in \mathscr{O}$ with $b' \succ w'$. By \textbf{Outcome Independence}, $\{b, X; w, X^{C}\} \succsim \{b, Y; w, Y^{C}\}$, so \textbf{Restricted Ordering} holds.  By Boundedness, for all $X \in \mathscr{F}$, $\Omega \succcurlyeq X$ and $X \succcurlyeq \varnothing$, so $\underline{b} \succsim \{b, X; w, X^C\}$ and $\{b, X; w, X^C\} \succsim \underline{w}$ for all $b,w \in \mathscr{O}$ with $b \succ w$ so \textbf{Certain Prize} holds. 

Finally, let $X \cap Z = Y \cap Z = \varnothing$ and assume $\{b, X; w, X^{C}\} \succ \{b, Y; w, Y^{C}\}$ for some $b,w \in \mathscr{O}$ with $b \succ w$. By Definition \ref{def:comparative}, $X \succ Y$, so by Qualitative Additivity, $X \cup Z \succ Y \cup Z$. Again by Definition \ref{def:comparative} and \textbf{Outcome Independence}, $\{b, X \cup Z; w, (X \cup Z)^{C}\} \succ \{b, Y \cup Z; w, (Y \cup Z)^{C}\}$. A similar argument shows the converse entailment, so \textbf{Alternative Prize} holds.
\end{proof}

\noindent We can ensure countable additivity by adding this axiom:
\begin{quote}
\textbf{Monotone Preference Continuity}. For any $b, w \in \mathscr{O}$ with $b \succ w$ and any monotonically increasing sequence of events $X_1 \subseteq X_2  \subseteq ... $ with $\bigcup_{n=1}^{\infty} X_i = X$, if for all $n$, $\{b, Y; w, Y^C\} \succsim \{b, X_n; w, X_n ^C\}$, then $\{b, Y; w, Y^C\} \succsim \{b, X; w, X^C\}$.
\end{quote}
Building on work by \textcite{Villegas1964}, we can show:
\begin{theorem}
If the preference relation $\succsim$ satisfies \textbf{Outcome Independence}, \textbf{Non-Degeneracy}, \textbf{Restricted Ordering}, \textbf{Certain Prize}, \textbf{Alternative Prize}, \textbf{Event Richness} and \textbf{Monotone Preference Continuity},  there is a unique countably additive probability function representing $\succcurlyeq$.
\end{theorem}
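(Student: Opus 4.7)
The plan is to piggyback on Theorem \ref{thm:representation}: the first six axioms already give us a unique finitely additive probability $p$ representing $\succcurlyeq$, so all that remains is to show that adding \textbf{Monotone Preference Continuity} forces $p$ to be countably additive. Since countable additivity of a finitely additive measure on a $\sigma$-algebra is equivalent to continuity from below (i.e. $X_1 \subseteq X_2 \subseteq \dots$ with $\bigcup_n X_n = X$ implies $p(X_n) \to p(X)$), my whole task reduces to proving this continuity property of $p$.

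First I would translate \textbf{Monotone Preference Continuity} into a statement purely about $\succcurlyeq$. By Definition \ref{def:comparative}, \textbf{Outcome Independence}, and \textbf{Non-Degeneracy}, the axiom is equivalent to: for any monotonically increasing sequence $X_1 \subseteq X_2 \subseteq \dots$ with $\bigcup_n X_n = X$ and any $Y \in \mathscr{F}$, if $Y \succcurlyeq X_n$ for all $n$, then $Y \succcurlyeq X$. Call this property $(\star)$.

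Next, fix an increasing sequence $X_n \uparrow X$ and set $L := \lim_n p(X_n)$, which exists because $p(X_n)$ is increasing and bounded above by $p(X)$. Suppose, for \emph{reductio}, that $L < p(X)$. Here is where the heavy lifting happens: I need to produce an event $Y$ with $L < p(Y) < p(X)$, because such a $Y$ satisfies $p(Y) > p(X_n)$ for all $n$, hence $Y \succ X_n$ for all $n$ by the representation, hence $Y \succcurlyeq X$ by $(\star)$, hence $p(Y) \geq p(X)$ by the representation again --- contradicting $p(Y) < p(X)$. To find such a $Y$, I invoke the construction underlying Theorem \ref{thm:representation}: for every $n \in \mathbb{N}$, \textbf{Event Richness} together with the other axioms yields an $n$-fold equiprobable partition of $\Omega$, and unions of $k$ of its cells have probability exactly $k/n$. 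Choosing $n$ large enough that $1/n < p(X) - L$ and then picking $k$ with $L < k/n < p(X)$, the union of any $k$ cells delivers the desired $Y$. This establishes $p(X_n) \to p(X)$, and hence countable additivity of $p$. Uniqueness is inherited from Theorem \ref{thm:representation}.

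The main obstacle is the translation step and the construction of $Y$: one has to verify that the equiprobable partitions supplied by \textbf{Event Richness} really do give events of every rational probability $k/n$, and that the inequality $Y \succ X_n$ transfers through to the hypothesis of $(\star)$. Everything else --- bounded monotone convergence of $p(X_n)$, Villegas-style equivalence between continuity from below and countable additivity, and the reduction of \textbf{Monotone Preference Continuity} to $(\star)$ --- is routine once the key lemma about density of the range of $p$ is in hand.
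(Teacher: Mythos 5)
Your proof is correct, but it takes a genuinely different route from the paper's. The paper translates \textbf{Monotone Preference Continuity} into the corresponding condition on $\succcurlyeq$ (your property $(\star)$, which the paper calls \textbf{Monotone Probability Continuity}), observes that \textbf{Event Richness} makes $\succcurlyeq$ atomless, and then cites \textcite{Villegas1964} as a black box: an atomless, monotonely continuous qualitative probability has a unique countably additive representation, which must coincide with the $p$ from Theorem \ref{thm:representation}. You instead prove continuity from below of $p$ directly: given $X_n \uparrow X$ with $L := \lim_n p(X_n) < p(X)$, you use the equiprobable partitions from the proof of Theorem \ref{thm:representation} to manufacture an event $Y$ with $L < p(Y) < p(X)$, derive $Y \succcurlyeq X_n$ for all $n$ from the representation, apply $(\star)$ to get $Y \succcurlyeq X$, and reach a contradiction. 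The details check out: the cells of an $n$-fold equiprobable partition each receive probability $1/n$ by finite additivity and the representation, so events of probability $k/n$ exist for all $0 \leq k \leq n$, and choosing $n$ with $1/n < p(X) - L$ guarantees an integer $k$ with $L < k/n < p(X)$. (Even if the construction only supplied $2^m$-fold partitions, dyadic density would suffice.) What each approach buys: the paper's is shorter and delegates the analytic work to Villegas, at the cost of needing to verify atomlessness and of leaving the mechanism opaque; yours is self-contained modulo Theorem \ref{thm:representation} and makes visible exactly where \textbf{Monotone Preference Continuity} does its work, namely in forcing the supremum of the $p(X_n)$ up to $p(X)$ via the density of the range of $p$.
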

\begin{proof}
Assume the preference relation $\succsim$ satisfies \textbf{Outcome Independence}, \textbf{Non-Degeneracy}, \textbf{Restricted Ordering}, \textbf{Certain Prize}, \textbf{Alternative Prize} and \textbf{Event Richness}. Then $\succcurlyeq$ is  \emph{atomless}, which means that for every $X \succ \varnothing$, there is some $Y \subseteq X$ such that $X \succ Y \succ \varnothing$. Now, given \textbf{Monotone Preference Continuity}, $\succcurlyeq$ satisfies:
\begin{quote}
\textbf{Monotone Probability Continuity.} If $X_1, X_2, ...$ is a monotonically increasing sequence of events with $\bigcup_{n=1}^{\infty} X_i = X$, and for every $n$, $Y \succcurlyeq X_n$, then $Y \succcurlyeq X$.
\end{quote}
\textcite{Villegas1964} shows that if $\succcurlyeq$ is an atomless qualitative probability and \textbf{Monotone Probability Continuity} holds, there is a unique countably additive probability function representing $\succcurlyeq$. By Theorem \ref{thm:representation}, there is a unique probability function representing $\succcurlyeq$. By Villegas' result, \textbf{Monotone Preference Continuity} implies that this probability function must be countably additive.
\end{proof}

\end{document}